\newtheorem{assumption}{Assumption}
\newtheorem{theorem}{Theorem}
\newtheorem{definition}{Definition}
\newcommand{\bias}{\text{Bias}}
\newcommand{\cov}{\text{Cov}}
\newcommand{\E}{\mathbb{E}}
\newcommand{\spp}{\text{sp}}
\newcommand{\indep}{\perp \!\!\! \perp}
\newcommand{\yitrue}{Y_i^{(t)}}
\newcommand{\yirep}{Y_i^{(r)}}
\DeclareMathOperator*{\argmin}{argmin}
\title{Robust Designs for Prospective Randomized Trials Surveying Sensitive Topics}
\author{Evan T. R. Rosenman}
\affil{Harvard University Data Science Initiative}
\author{Rina Friedberg}
\affil{LinkedIn Data Science and Applied Research}
\author{Mike Baiocchi}
\affil{Stanford University School of Medicine}
\begin{document}

\maketitle

\abstract{We consider the problem of designing a prospective randomized trial in which the outcome data will be self-reported, and will involve sensitive topics. Our interest is in misreporting behavior, and how respondents' tendency to under- or overreport a binary outcome might affect the power of the experiment. We model the problem by assuming each individual in our study is a member of one ``reporting class": a truth-teller, underreporter, overreporter, or false-teller. We show that the joint distribution of reporting classes and ``response classes" (characterizing individuals' response to the treatment) will exactly define the bias and variance of the causal estimate in our experiment. Then, we propose a novel procedure for deriving sample sizes under the worst-case power corresponding to a given level of misreporting. Our problem is motivated by prior experience implementing a randomized controlled trial of sexual violence prevention program among adolescent girls in Nairobi, Kenya.}

\section{Introduction}

In a wide swath of social science, political science, and public health research, outcome data is drawn from self-reported survey responses  \citep{cimpian2020mischievous, imai2010causal, zaller1992simple}. Accurate responses are paramount to the integrity of such research -- and, as a result, there is a vast literature discussing best practices for surveying sensitive topics, including specially designed survey instruments \citep[see e.g.][]{droitcour2001three}. Nonetheless, the literature contains myriad examples of surveys whose conclusions were skewed by subpopulations of misreporters, especially when the respondents were adolescents \citep{cimpian2020mischievous}.

Statistical corrections have been developed for a variety of problems involving mismeasured data \citep{carroll2006measurement}. Early work in this area focused on error that is stochastically independent of outcomes and covariates, and how it might propagate through the linear model. More modern methods have moved beyond the classical assumption that measurement error is uncorrelated with outcomes, and have sought to adjust model-based procedures to mitigate bias, as well as properly account for uncertainty due to the error \citep{hausman2001mismeasured}. In the setting of causal inference, there has been a substantial research interest in the issue of treatment compliance \citep{frangakis2002principal, balke1997bounds}, and the related issue of measurement error in treatment assignment variables \citep{imai2010causal, lewbel2007estimation}. The problem of measurement error in outcome variables has been comparatively understudied. 

Our focus in this article is on mismeasurement in binary outcomes for causal inference. Our motivation arises from prior experience designing and analyzing a randomized trial of a sexual violence prevention program targeting adolescent girls in Kenya \citep{rosenman2020empirical}. In that study, girls self-reported key outcomes via survey responses. Inaccurate reports -- also known as ``reporting biases" -- represented a threat to causal inference. 

Here, we operate in a prospective setting, supposing that decisions are being made about how to formulate and conduct a study, including the number of units to recruit in order to achieve a desired power level. 
We posit that -- for a given survey instrument -- the tendency to misreport a binary outcome is a fixed characteristic of the individual. We thus define each individual in our population as a member of a single ``reporting class": either a  truth-teller, underreporter, overreporter, or false-teller. 

We separately invoke the notion of a ``response class" \citep{hernan2010causal} to characterize how each individual responds to the treatment. Under our assumptions, the joint distribution of response classes and reporting classes fully determines the bias and variance in estimating the average treatment effect in a randomized trial. This notion yields direct insights into how to better design a prospective trial when under- and overreporting behavior is suspected. 

The remainder of this paper proceeds as follows. Section \ref{sec:relatedWork} reviews prior literature in the areas of measurement error and response bias. Section \ref{sec:definitions} introduces our definition of reporting classes and provides motivating examples. Section \ref{sec:results} introduces our main theoretical results, including a novel procedure for computing the required sample size to achieve a desired power under key assumptions about the population frequency of misreporters. Section \ref{sec:workedExample} works through an example based on real-life data from Kenya. Section \ref{sec:specialCase} discusses a useful extension when researchers have acess to additional information via a pilot study. Section \ref{sec:conclusion} concludes.

\section{Related Work} \label{sec:relatedWork}

Our work builds on themes from the psychometrics literature, wherein researchers design survey measurements and analysis practices to more accurately measure sensitive or self-reported data (see for example, \citet*{pickles}). Some survey respondents are hesitant to report sensitive data, or could be harmed if their individual responses were observed. \citet*{droitcour2001three} address this problem with the three-card method, which protects the privacy of the individual while still allowing for unbiased population estimates. \citet*{doi:10.1177/2332858419888892} address bias in estimates due not to concerns about privacy for sensitive information, but due to mischievous responders. Their work addresses students who falsely self-report as lesbian, gay, bisexual, or questioning, and correspondingly skew estimates in studies that aim to understand and serve this community. Instead of changing the survey estimates, they leverage statistical techniques, such as boosted regressions with outlier removal, to address bias in the survey responses.

Regarding measurement error in treatments, \citet*{flegel} show how errors in exposure measurement can lead to errors in classification, wherein misclassification probabilities for disease predictions differed by disease status. They focus on non-differential measurement error, meaning the errors do not differ by study arm (as opposed to differential errors, where exposure or reporting measurement error can differ in the treatment and control groups). As a solution, \citet{10.1093/aje/kwaa208} suggest finding an unexposed reference group and estimating a disease score that estimates unintended exposure effects. This can confer conditional independence between the potential outcomes under no exposure, and the measured covariates. In another approach, \citet{coleAJE} address measurement error in exposures for longitudinal observational studies, suggesting either multiple imputation or Bayesian methods.
 
 Our work focuses on measurement error in outcomes. \citet*{doi:10.1080/03610926.2014.887105} give a general parametric model for misclassified binary outcomes that vary with covariates, showing how to find consistent estimates for model parameters and individual misclassification probabilities. \citet{10.1093/aje/kwz133} propose a sensitivity analysis to determine how strong differential measurement error in either the treatment or the outcome would need to be, in order to explain the entire result. Their method compares the true effect of the exposure with the association between the exposure and the mismeasured outcome, on scales that depend on risk ratios of the true outcome probabilities conditioned on the exposure, and the mismeasured probabilities, respectively. 
 
Tools from the causal inference literature have been extremely useful in recent literature studying differential measurement error. \citet*{imai2010} derive bounds of the average treatment effect under differential measurement error in a binary treatment variable, and introduce a sensitivity analysis to help the researcher understand how robust the estimate is to such measurement errors. \citet*{colecausal} also leverage insights from causal inference to discuss measurement error; in fact, our work is most similar in spirit to their research. The authors use potential outcomes to formalize a discussion of ``missing data" due to measurement error in exposures. We build on this by introducing a framework for reporting behavior and deriving the statistical properties of treatment effect estimates under different reporting behaviors. 

In this way, our work also builds on \citet*{angristcausalIV}, in which the authors use instrumental variables analysis as a replacement for intention-to-treat methods. Our construction of reporting classes is directly inspired by compliance classes, which describe how individuals respond to an optional treatment assignment. As an example, consider individuals receiving a voting incentive -- any voter can either receive the incentive or not (randomly), and can either vote or ignore the incentive. This maps to four compliance classes. There are never-takers, who would never take the treatment (here, who would never vote); always-takers, who would always take it no matter what; compliers, who take the treatment to which they are assigned; and defiers, who do the opposite. 

\section{Definitions and Examples} \label{sec:definitions}
In this section, we work through an example, introduce terminology, and point to a handful of constraints that may be easily missed.

\subsection{Reporting classes}\label{sec:reportClass}

Consider a randomized trial in which students are randomly assigned to receive either a violence prevention program (``intervention'') or a training unrelated to violence prevention (``control''). The intervention's goal is to reduce the students experience of violence, so the outcome is a binary: ``yes, I experienced violence in the prior 12 months'' $(=1)$ or ``no, I did not experience violence in the prior 12 months'' $(=0)$. We will operate in the potential outcomes framework \cite{rubin1974estimating}, meaning each individual $i$ in the trial $i$ has two associated outcomes: $Y_i(1)$, the realized outcome if given the intervention; and $Y_i(0)$, the realized outcome if given the control. For simplicity, we will stipulate that the stable unit treatment value assumption (SUTVA) holds, meaning that the outcome for each individual in the trial is unaffected by the assignment of treatments to the other units \citep{imbens2015causal}. 
Lastly, for ease of explication, we first discuss this study assuming there are only two types of reporting classes of students: truth-tellers and underreporters. 

A truth-teller does what we usually assume: only if they experienced violence will they report having experienced violence. An underreporter, however, reports not having experienced violence under any circumstance -- e.g., whether they actually have experienced violence or not. 

We can hypothesize causes for this kind of reporting behavior, in which the true response is ``yes" but the reported response is ``no." Suppose there are three causes of underreporting in this setting: (i) students fear negative consequences to themselves if they report a violent event, (ii) students face feelings of shame for having been a victim of violence, or (iii) students are mistaken or confused about what constitutes an experience of violence. 



Careful consideration of such hypotheses may help researchers to modify and design a prospective study to mitigate misreporting bias in the design phase. For example, (i) to address concerns of negative consequences of the report, the researchers may choose to use a technique like ``randomized response,'' which offers a higher degree of anonymity in response, (ii) to reduce feelings of shame, researchers may choose to word the the question to avoid triggers of shame, and (iii) to clarify what constitutes ``experiencing violence," researchers may use detailed scenarios of what does and does not constitute violence; use specific physical and verbal acts; or use common slang terms which are more familiar to students. 



\subsection{Response classes and reporting classes}\label{sec:RRclasses}

In the prior section, we considered one type of grouping for individuals in our study: reporting classes. We now turn to a second type of grouping, \emph{response} classes. 

Response classes are sets of participants with the same potential outcomes. For a useful prior discussion of this concept, see \cite{hernan2010causal}. We use the following naming for the response classes: 
\begin{itemize}
    \item \emph{decrease} for $(Y_i(0)=1,Y_i(1)=0)$,
    \item \emph{increase} for $(Y_i(0)=0,Y_i(1)=1)$,
    \item \emph{never} for $(Y_i(0)=0,Y_i(1)=0)$, and
    \item \emph{always} for $(Y_i(0)=1,Y_i(1)=1)$.
\end{itemize}
Note that our naming convention for response classes departs from the one used by Hern\'an and Robin. We are trying to avoid valences for the outcome directions (e.g., instead of ``helped,'' we use ``increase''). 

A key subtlety: for a specific intervention and control, being in a certain response class may preclude an individual from being in a particular reporting class. For example, note that students do not have the ability to underreport if they would never experience violence regardless of intervention level; they would always report ``no.'' Hence, we say that individuals in the ``never" response class cannot be underreporters. A more formal discussion of this idea can be found in Section \ref{sec:simpAssumptions}. 


As described technically in Section \ref{sec:biasResults}, the joint distribution of response classes and reporting classes determines the potential for bias. To understand this point, consider how response classes and reporting classes might jointly vary in the violence prevention example. First consider the ``decrease'' response class $(Y_i(0) = 1, Y_i(1) = 0)$. There may be students inside of this group who feel that they can modify their chances of experiencing violence, and also may feel shame if they are unable to prevent the violence. Contrast that group with the ``always'' group $(Y_i(0) = Y_i(0) = 1)$. The ``always'' group may not feel violence can be modified and therefore feels more comfortable reporting their experiences truthfully. In this scenario, we might imagine underreporting would exist more among those responsive to the treatment. This scenario would be particularly challenging for unbiased estimation because it means that there will be more underreporting among the decrease response class than among the always response class. This will bias estimation of the treatment effect toward $0$, much more so than if underreporting behavior were equally probable among the decrease and always groups.

There are at least two features of the study protocol that
determine reporting classes. First, there is how the measurement is obtained. This includes aspects of the measurement tool, such as question wording and response type (e.g., multiple choice, open-response), as well as the context in which it is asked (e.g., computer-based survey versus one-on-one interviews). Second, there are the specifics of the intervention and control. For example, if researchers change the control level to be more intensive (e.g., by using a ``standard of care'' rather than an ``attention control'') then some participants may shift their response class from ``increase'' to ``always'' and therefore shift their reporting class from overreporters to truth-tellers. 

We take the reporting class to be fixed after the intervention levels and the measurement procedure are fixed. Our primary motivation is describing the challenges of the measurement itself, helping to anticipate and improve how participants react to being measured. Defining the reporting class in this way precludes an important type of measurement errors that arise differentially by arm. For example, we do not allow for the possibility that the violence prevention program itself induces individuals to underreport experiences of violence more than they might have otherwise. Such ``demand effects" are important, but outside the scope of this manuscript.

\subsection{A remark about false-tellers}\label{sec:remarkFalse}

We have not offered an example of false-tellers. Can false-tellers exist? Yes, but like defiers in the literature on instrumental variables, we suspect they are not as common as the other types. The motivation to avoid telling the truth -- in fact, being willing to report either outcome-level, just not the true outcome-level -- is a more complicated dynamic than the other motivations discussed above. We are aware of anecdotal examples of quarrelsome study participants who might function as false-tellers. 

\section{Theoretical Results} \label{sec:results}

In this section, we formalize the insights described in the prior section, and then introduce our main theoretical results. 

\subsection{Preliminaries}
\subsubsection{Model, Assumptions, and Notation}

We consider a super-population of $i = 1, \dots, N_{\spp}$ units, of whom $2n << N_{\spp}$ units will be sampled for a completely randomized experiment. Once the experimental units are drawn, exactly $n$ of the units are selected via simple random sample to receive the treatment, while the remaining $n$ units are assigned to the control condition. With each unit we associate a pair of fixed, binary potential outcomes $(Y_i(0), Y_i(1)) \in \{0, 1\}^2$, representing the unit's true response in the presence or absence of treatment. Our target of estimation will be the super-population treatment effect,
\[ \tau_{\spp} = \frac{1}{N_{\spp}} \sum_{i = 1}^{N_{\spp}} Y_i(1) - Y_i(0) \,. \] 

We will denote by $R_i \in \{0, 1\}$ the indicator of being sampled into the experiment
. Among units chosen to be in the experiment, we define a second indicator, $W_i \in \{0, 1\}$, such that
\[ W_i = \left\{ \begin{array}{cc} 1 & \text{if unit $i$ is treated} \\ 0 & \text{if unit $i$ is control} \end{array} \right. \] 
The two processes are supposed independent, such that $R_i \indep W_j$ for all $i, j$. We denote as $\E_R(\cdot)$ expectation with respect to the $R_i$ and $\E_{W}(\cdot)$ expectation with respect to the $W_i$. For any estimator $\phi$, we define 
\[ \E(\phi) = E_R\left(E_W \left(\phi \mid \{R_i \}_{i = 1}^N\right)\right)\,. \] 
Note also that $E_R(R_i) = 2n/N_{sp}$ and $E_W(W_i) = n_t/n$. 

Our interest is in the setting of under- and overreporting. Hence, we associate with each unit two additional fixed binary constants: 
\[ U_i = \left\{ \begin{array}{cc} 1 & \text{if $i$ is an underreporter} \\ 0 & \text{otherwise} \end{array} \right.,  \hspace{5mm}  O_i = \left\{ \begin{array}{cc} 1 & \text{if $i$ is an overreporter} \\ 0 & \text{otherwise} \end{array} \right. \] 

Per the discussion in Section \ref{sec:remarkFalse}, we make the following assumption. 
\begin{assumption}\label{assumption:noDefiers}
Every unit in the super-population is either an underreporter, an overreporter, or a truth-teller. There are no ``false-tellers" -- those who always report the opposite outcome of the truth. 
\end{assumption}
Under Assumption \ref{assumption:noDefiers}, we see both that $U_i O_i = 0$ for all units and that $U_i = O_i = 0$ implies a unit is a truth-teller. We will sometimes refer collectively to the group of underreporters and overreporters as ``misreporters.''

We define two versions of the outcome: $\yitrue$, the true outcome, and $\yirep$, the reported outcome. $\yitrue$ is defined in the standard way: 
\[ \yitrue = W_i Y_i(1) + (1-W_i) Y_i(0) \,.\] 
The reported outcome has a more complex structure: 
\[ \yirep = (1-U_i)(1-O_i) \left( W_i Y_i(1) + (1-W_i) Y_i(0)\right) + O_i\,. \] 
By inspection, one can see that the above definition yields the expected reporting behavior. Regardless of the value of $\yitrue$, the reported outcome is 0 if $U_i = 1$ and is 1 if $O_i = 1$. Only if $U_i = O_i = 0$ do we have $\yirep = \yitrue$. 

The target of inference is $\tau_{\spp}$, the super-population causal effect. We define $\bar U_{\spp}$ and $\bar O_{\spp}$ as the super-population averages of $U_i$ and $O_i$. 

\subsubsection{Response Classes}

We define four response classes -- ``Decrease", ``Increase", ``Never", and ``Always" -- reflecting the individual's potential outcomes with or without the treatment. This terminology is slightly more generic than the one often used in epidemiology: ``helped," ``hurt," ``immune," and ``doomed" \citep{hernan2010causal}. Our reasoning is that we would like to consider cases where $Y_i = 1$ is a beneficial, rather than deleterious, outcome. Our terminology does not assign a value judgment to the state of the outcome. 

We define binary variables $D_i, I_i, N_i, A_i \in \{0, 1\}$ reflecting whether each individual $i$ falls into the Decrease, Increase, Never, or Always classes. Every individual must belong to exactly one of the four response classes, so we have that \[ D_i + I_i + N_i + A_i = 1 \text{ for all $i$.} \] 

Using these definitions, we can summarize our super-population via the following table. We give the response class on the left and the reporting class on the top. Boxes marked with an ``x" do not exist due to assumptions. In the remaining boxes, the given symbol denotes the population fraction of units that fall into the given category. We drop the ``sp" subscript for simplicity.

\begin{table}[h]
\centering
\begin{tabular}{l|ll|llll|l}
                & \textbf{Y(0)} & \textbf{Y(1)} & \textbf{\begin{tabular}[c]{@{}l@{}}Truth-\\ teller\end{tabular}} & \textbf{\begin{tabular}[c]{@{}l@{}}Over-\\ reporter\end{tabular}} & \textbf{\begin{tabular}[c]{@{}l@{}}Under-\\ reporter\end{tabular}} & \textbf{\begin{tabular}[c]{@{}l@{}}False-\\ teller\end{tabular}} &  \\ \hline
\textbf{Decrease} & 1             & 0             &   $\overline{T D}$                                                               &  $\overline{O D}$                                                                 &     $\overline{U D}$                                                               & x                                                                & $\overline{D}$ \\
\textbf{Increase}   & 0             & 1             &  $\overline{T I}$                                                                &  $\overline{O I}$                                                                 & $\overline{U I}$                                                                   & x                                                                & $\overline{I}$ \\
\textbf{Never} & 0             & 0             & $\overline{T N}$                                                                 & $\overline{O N}$                                                                   & $\overline{U N} $                                                                 & x                                                                & $\overline{N}$ \\
\textbf{Always} & 1             & 1             &  $\overline{T A}$                                                                & $\overline{O A} $                                                                & $\overline{U A}$                                                                   & x                                                                & $\overline{A}$ \\ \cline{1-8}
                &              &              &  $\overline{T}$                                & $\overline{O}$                                                & $\overline{U}$                                                &                                                              & 
\end{tabular}
\end{table}

\subsubsection{Further Simplifying Assumptions}\label{sec:simpAssumptions}

Eagle-eyed readers will observe that our current definitions of under- and overreporters engenders a certain level of ambiguity. To make this subtlety clearer, we focus specifically on underreporters and draw out the following table of possible orientations of our binary indicators and outcomes. 

\begin{table}[h]
\centering
\begin{tabular}{|l|l|l||l|l||l|l|}
\hline
\textbf{Response Class} & $\boldsymbol{Y_i(0)}$ & $\boldsymbol{Y_i(1)}$ & $\boldsymbol{W_i}$ & $\boldsymbol{U_i}$ & $\boldsymbol{\yitrue}$ & $\boldsymbol{\yirep}$ \\ \hline
Decrease & 1    & 0    & 1 & 1 & 0               & 0      \\ \hline
Decrease & 1    & 0    & 0 & 1 & 1               & 0      \\ \hline
Increase & 0    & 1    & 1 & 1 & 1               & 0      \\ \hline
Increase & 0    & 1    & 0 & 1 & 0               & 0      \\ \hline
Never & 0    & 0    & 1 & 1 & 0               & 0      \\ \hline
Never & 0    & 0    & 0 & 1 & 0               & 0      \\ \hline
Always & 1    & 1    & 1 & 1 & 1               & 0      \\ \hline
Always & 1    & 1    & 0 & 1 & 1               & 0      \\ \hline
\end{tabular}
\end{table}

In many of the rows, we observe $\yitrue = \yirep$ -- that is, even if a subject is an underreporter, her reported outcome is the same as her true outcome, because her true outcome is 0 under a given treatment assignment. For the Decrease, Increase, and Always response class, there exists at least one treatment assignment such that being an underreporter changes the reported outcome from the true outcome. But for the Never response class, the reported outcome and true outcomes are always 0, regardless of treatment assignment. Hence, the designation as an underreporter would be a pure abstraction for an individual in the Never response class, since she could not exhibit underreporting behavior under any randomization of the treatment assignment. 

To resolve this ambiguity, we offer the following definition. 

\begin{definition}\label{def:noAmbig}
A unit can be an underreporter or an overreporter if and only if there exists some treatment assignment under which $\yirep \neq \yitrue$. 
\end{definition}

Under this definition, an individual in the Never response class cannot be an underreporter, while an individual in the Always response class cannot be an overreporter. All other configurations are possible, because there exists at least one value of $W_i$ (and possibly both values) such that the reported outcome will differ from the true outcome. 

Under Definition \ref{def:noAmbig}, we revise the proportions table to exclude two more possible combinations. This yields the following table: 

\begin{table}[h]
\centering
\begin{tabular}{l|ll|llll|l}
                & \textbf{Y(0)} & \textbf{Y(1)} & \textbf{\begin{tabular}[c]{@{}l@{}}Truth-\\ teller\end{tabular}} & \textbf{\begin{tabular}[c]{@{}l@{}}Over-\\ reporter\end{tabular}} & \textbf{\begin{tabular}[c]{@{}l@{}}Under-\\ reporter\end{tabular}} & \textbf{\begin{tabular}[c]{@{}l@{}}False-\\ teller\end{tabular}} &  \\ \hline
\textbf{Decrease} & 1             & 0             &   $\overline{T D}$                                                               &  $\overline{O D}$                                                                 &     $\overline{U D}$                                                               & x                                                                & $\overline{D}$ \\
\textbf{Increase}   & 0             & 1             &  $\overline{T I}$                                                                &  $\overline{O I}$                                                                 & $\overline{U I}$                                                                   & x                                                                & $\overline{I}$ \\
\textbf{Never} & 0             & 0             & $\overline{T N}$                                                                 & $\overline{O N}$                                                                   & x                                                                  & x                                                                & $\overline{N}$ \\
\textbf{Always} & 1             & 1             &  $\overline{T A}$                                                                & x                                                                & $\overline{U A}$                                                                   & x                                                                & $\overline{A}$ \\ \cline{1-8}
                &              &              &  $\overline{T}$                                & $\overline{O}$                                                & $\overline{U}$                                                &                                                              & 
\end{tabular}
\end{table}

\subsection{Bias Results}\label{sec:biasResults}
We consider the bias of the finite sample difference-in-means estimator, defined as 
\[ \hat \tau = \frac{1}{n} \sum_{i = 1}^{N_{\spp}} Y_i W_iR_i - \frac{1}{n} \sum_{i = 1}^{N_{\spp}} Y_i(1-W_i)R_i \,.\] 

\begin{theorem}[Bias of Difference-in-Means Estimator]\label{thm:bias}
Define $\tau_i = Y_i(1) - Y_i(0)$ for $i = 1, 2, \dots, N_{sp}$ such that 
\[ \tau_{sp} = \frac{1}{N_{sp}} \sum_{i = 1}^{N_{sp}} \tau_i \,.\] 
The bias of the difference-in-means estimator in estimating $\tau_{sp}$ is given by 
\begin{align*}
    \bias(\hat \tau) &= -\left(\left(\bar U_{\spp} + \bar O_{\spp}\right) \tau_{\spp} + \cov_{\spp}\left(U, \tau \right) + \cov_{\spp}\left(O, \tau \right)\right) \\
&= - \overline{UI} + \overline{UD} - \overline{OI}  + \overline{OD}
\end{align*} 
where $\cov_{\spp}\left(U, \tau \right)$ is the super-population covariance between $U_i$ and $\tau_i$ and $\cov_{\spp}\left(O, \tau \right)$ is the super-population covariance between $O_i$ and $\tau_i$. 
\end{theorem}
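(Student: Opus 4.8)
The plan is to compute $\E(\hat\tau)$ directly and subtract the estimand, since $\bias(\hat\tau) = \E(\hat\tau) - \tau_{\spp}$ by definition. The crucial first move is to substitute the reported-outcome formula $\yirep = (1-U_i)(1-O_i)\left(W_i Y_i(1) + (1-W_i)Y_i(0)\right) + O_i$ into the estimator \emph{before} taking any expectations, because the object $Y_i = \yirep$ that actually enters $\hat\tau$ itself depends on the treatment indicator $W_i$. Writing out the products and using the identities $W_i^2 = W_i$, $(1-W_i)^2 = 1-W_i$, and $W_i(1-W_i)=0$, the treated-arm summand collapses to $R_i W_i\bigl((1-U_i)(1-O_i)Y_i(1) + O_i\bigr)$ and the control-arm summand to $R_i(1-W_i)\bigl((1-U_i)(1-O_i)Y_i(0) + O_i\bigr)$. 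This is the step that renders the later cancellations transparent.

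Next I would take expectations arm by arm using the iterated structure $\E(\cdot)=\E_R\bigl(\E_W(\cdot\mid\{R_i\})\bigr)$ together with the independence $R_i \indep W_j$. The only facts needed are the marginal first moments: a sampled unit is assigned to treatment with probability $n/(2n)=1/2$, and $\E_R(R_i)=2n/N_{\spp}$, so $\E(R_i W_i)=\E(R_i(1-W_i))=n/N_{\spp}$. Each $1/n$ prefactor then combines with $n/N_{\spp}$ to produce a clean $1/N_{\spp}$ population average. Crucially, the additive overreporter term $O_i$ appears identically in both arms, so it cancels in the difference, and the expectation reduces to $\E(\hat\tau) = \frac{1}{N_{\spp}}\sum_i (1-U_i)(1-O_i)\tau_i$ with $\tau_i = Y_i(1)-Y_i(0)$. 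Invoking Assumption \ref{assumption:noDefiers} to set $U_iO_i=0$ gives $(1-U_i)(1-O_i)=1-U_i-O_i$, so that $\E(\hat\tau) = \tau_{\spp} - \frac{1}{N_{\spp}}\sum_i U_i\tau_i - \frac{1}{N_{\spp}}\sum_i O_i\tau_i$ and hence $\bias(\hat\tau) = -\frac{1}{N_{\spp}}\sum_i U_i\tau_i - \frac{1}{N_{\spp}}\sum_i O_i\tau_i$.

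The two displayed forms then follow by rewriting these two cross-averages in two different ways. For the covariance form, I would use $\frac{1}{N_{\spp}}\sum_i U_i\tau_i = \cov_{\spp}(U,\tau) + \bar U_{\spp}\tau_{\spp}$ and its analogue for $O$, which rearranges immediately into $-\bigl((\bar U_{\spp}+\bar O_{\spp})\tau_{\spp} + \cov_{\spp}(U,\tau)+\cov_{\spp}(O,\tau)\bigr)$. For the response-class form, I would evaluate $\tau_i$ on each class: $\tau_i=-1$ on Decrease, $\tau_i=+1$ on Increase, and $\tau_i=0$ on both Never and Always. Thus $\frac{1}{N_{\spp}}\sum_i U_i\tau_i = \overline{UI} - \overline{UD}$ and $\frac{1}{N_{\spp}}\sum_i O_i\tau_i = \overline{OI} - \overline{OD}$, giving $-\overline{UI}+\overline{UD}-\overline{OI}+\overline{OD}$.

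I do not expect a serious obstacle; the argument is essentially a careful linear-expectation computation. The one point demanding care, and the step easiest to get wrong, is the order of operations: because $\yirep$ is a function of $W_i$, one must substitute and simplify the products $\yirep W_i$ and $\yirep(1-W_i)$ first, rather than naively treating $\yirep$ as a fixed response. A secondary sanity check is that the restrictions from Definition \ref{def:noAmbig} (underreporters cannot be Never, overreporters cannot be Always) are consistent with, but not actually required for, the final identity, since the vanishing of $\tau_i$ on the Never and Always classes already annihilates those terms.
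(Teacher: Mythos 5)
Your proposal is correct and follows essentially the same route as the paper's proof: rewriting $\yirep$ in potential-outcome form (the paper's $\tilde Y_i(1), \tilde Y_i(0)$), exploiting the cancellation of the additive $O_i$ term across arms, invoking Assumption \ref{assumption:noDefiers} to replace $(1-U_i)(1-O_i)$ with $1-U_i-O_i$, and then expanding once via the covariance identity and once via $\tau_i = I_i - D_i$. Your closing observation that Definition \ref{def:noAmbig} is not needed for the identity is a nice touch, but the argument itself matches the paper's.
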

\begin{proof}
See the Appendix, Section \ref{biasProof}. 
\end{proof}

Theorem \ref{thm:bias} shows us that there are several factors at play in determining the bias due to reporting behavior. If, in the super-population, we have $U_i, O_i \indep \tau_i$, then our estimate will be shrunk toward 0 by a multiplicative factor: 
\[ \frac{\E \left( \hat \tau \right)}{\tau_{\spp}} = 1 - \bar U - \bar O \hspace{5mm} \text{ under independence. } \] 
If independence does not hold, the bias may point in either direction, depending on the sign of the covariance terms. For example, if $\tau_{\spp}$ is positive, the bias may also be positive if those in the Decrease response class are disproportionately likely to be under- or overreporters.

\subsection{Power Under Independence}

These results can be extended to a power analysis. We consider Neyman's null hypothesis versus a one-directional alternative, 
\begin{align*}
H_0 &: \tau_{\spp} = 0\\
H_1 &: \tau_{\spp} < 0 \,. 
\end{align*}

As is clear from the bias results, it is \emph{not} true that non-differential measurement error must reduce detection power. However, power  does strictly fall with misreporter incidence when the treatment effect is independent of under- and overreporter status. This condition is equivalent to the stating that under- and overreporters are no more or less prevalent in any response class than any other (excepting the responses classes in which there are no under- or overreporters by assumption). 

This intuition is formalized in Theorem \ref{thm:indepPower}. 

\begin{theorem}\label{thm:indepPower}
Suppose $\tau_{\spp} < 0$ and, in the super-population, we have  $U_i$ \\$\indep Y_i(0), Y_i(1) \mid N_i = 0$ and $O_i \indep Y_i(0), Y_i(1) \mid A_i = 0$. Then the detection power is a strictly decreasing function of $\bar U_{\spp}$ and $\bar O_{\spp}$. 
\end{theorem}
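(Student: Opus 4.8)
The plan is to recast $\hat\tau$ as an ordinary difference-in-means estimator for a pair of \emph{reported} potential outcomes and then reduce the statement to the monotonicity of a single noncentrality parameter. Define $\tilde Y_i(w) = (1-U_i)(1-O_i)Y_i(w) + O_i$ for $w \in \{0,1\}$, so that $\yirep = W_i \tilde Y_i(1) + (1-W_i)\tilde Y_i(0)$ and $\hat\tau$ is \emph{exactly} the difference-in-means of the $\tilde Y_i$. Hence $\E(\hat\tau)$ equals the reported effect $\tilde\tau_{\spp} = \frac{1}{N_{\spp}}\sum_i \big(\tilde Y_i(1) - \tilde Y_i(0)\big)$, which by Theorem \ref{thm:bias} is $\tau_{\spp} + \bias(\hat\tau)$.

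First I would translate the two conditional-independence hypotheses into a clean parametrization. Writing $u$ for the common underreporting rate across the non-Never classes and $o$ for the common overreporting rate across the non-Always classes, the assumptions give $\overline{UD}=u\bar D$, $\overline{UI}=u\bar I$, $\overline{UA}=u\bar A$ and $\overline{OD}=o\bar D$, $\overline{OI}=o\bar I$, $\overline{ON}=o\bar N$, with $\bar U_{\spp}=u(1-\bar N)$ and $\bar O_{\spp}=o(1-\bar A)$. Substituting into the second form of the bias in Theorem \ref{thm:bias} collapses it to $\bias(\hat\tau)=(u+o)(\bar D-\bar I)=-(u+o)\tau_{\spp}$, so that $\E(\hat\tau)=\tilde\tau_{\spp}=(1-u-o)\tau_{\spp}$. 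Since the truth-teller fraction within the Increase and Decrease classes is $1-u-o\ge 0$, feasibility forces $u+o\le 1$, and thus $|\tilde\tau_{\spp}|=(1-u-o)|\tau_{\spp}|$ is nonnegative and strictly decreasing in each of $u$ and $o$. Because $\bar U_{\spp}$ and $\bar O_{\spp}$ are increasing linear functions of $u$ and $o$ at a fixed response-class distribution, it suffices to prove monotonicity in $(u,o)$.

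Next I would assemble the power function. Treating $\tilde Y_i(0),\tilde Y_i(1)$ as binary potential outcomes and using $2n \ll N_{\spp}$ to drop finite-population corrections, $\Var(\hat\tau)=(\sigma_1^2+\sigma_0^2)/n$ with $\sigma_w^2=p_w(1-p_w)$, where $p_1=\E(\tilde Y_i(1))=(\bar I+\bar A)(1-u)+o(\bar D+\bar N)$ and $p_0=\E(\tilde Y_i(0))=(\bar D+\bar A)(1-u)+o(\bar I+\bar N)$ (one checks $p_1-p_0=\tilde\tau_{\spp}$). Under the usual normal approximation with a consistent variance estimate, the level-$\alpha$ one-sided test rejects when $\hat\tau/\widehat{SE}<-z_{1-\alpha}$, and its power is $\Phi(\lambda-z_{1-\alpha})$ with noncentrality $\lambda=|\tilde\tau_{\spp}|/\sqrt{\Var(\hat\tau)}=\sqrt n\,(1-u-o)|\tau_{\spp}|/\sqrt{p_1(1-p_1)+p_0(1-p_0)}$. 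Since $\Phi$ is strictly increasing, the theorem reduces to showing that $\lambda$ is strictly decreasing in $u$ and in $o$.

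I expect the denominator to be the main obstacle: $V=p_1(1-p_1)+p_0(1-p_0)$ is itself nonmonotone, so the shrinking numerator $f:=1-u-o$ does not by itself fix the sign (indeed in simple examples $V$ decreases while $\lambda$ still falls). Writing $\lambda=\sqrt n|\tau_{\spp}|\,f/\sqrt V$, a short computation shows $\partial_u\lambda$ has the sign of $-(2V+f\,\partial_u V)$, and likewise for $o$, so everything comes down to proving $2V+f\,\partial_u V>0$ and $2V+f\,\partial_o V>0$. Here I would split $V$ into its $p_1$- and $p_0$-pieces and use the elementary identities $p_1-f(\bar I+\bar A)=o\ge 0$ and $(1-p_1)-f(\bar D+\bar N)=u\ge 0$, together with their $p_0$-analogues. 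For the $u$-derivative these give $f(\bar I+\bar A)\le p_1$, so splitting on the sign of $1-2p_1$ yields a $p_1$-piece bounded below by $p_1$ (when $1-2p_1\ge 0$) or by $2p_1(1-p_1)$ (otherwise); the $o$-derivative uses $f(\bar D+\bar N)\le 1-p_1$ to bound the corresponding piece below by $1-p_1$ or by $2p_1(1-p_1)$. Each piece is therefore nonnegative, and strictly positive because $\tau_{\spp}<0$ forces $\bar D>0$ and hence $0<p_0<1$. This gives $\partial_u\lambda<0$ and $\partial_o\lambda<0$, completing the proof. The only delicate bookkeeping is keeping the $p_1$ and $p_0$ contributions separate and handling the sign of $1-2p_w$; everything upstream is substitution.
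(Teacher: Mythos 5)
Your proposal is correct, and its skeleton matches the paper's proof: invoke the normal approximation so that power equals $\Phi\left(\Phi^{-1}(\alpha) - \E(\hat\tau)/\sqrt{V}\right)$, then reduce the theorem to showing the noncentrality ratio shrinks as misreporter incidence grows. But your execution diverges from the paper's at one consequential step, and your version is the more careful one. The paper's proof asserts that the hypotheses immediately give $\cov_{\spp}(U,\tau)=\cov_{\spp}(O,\tau)=0$, hence $\E(\hat\tau)=(1-\bar U_{\spp}-\bar O_{\spp})\tau_{\spp}$, and writes the arm means as $(1-\bar U_{\spp}-\bar O_{\spp})\overline{Y_i(1)}+\bar O_{\spp}$. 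That covariance claim is what \emph{unconditional} independence would give, but unconditional independence is incompatible with Definition \ref{def:noAmbig} (since $U_i=0$ forcibly on the Never class). Under the stated \emph{conditional} independence one actually has $\cov_{\spp}(U,\tau)=u\bar N\tau_{\spp}\neq 0$, where $u=\bar U_{\spp}/(1-\bar N)$ is the within-class rate -- which is exactly what your parametrization captures. Your $\E(\hat\tau)=(1-u-o)\tau_{\spp}$ and your formulas for $p_1,p_0$ are the correct ones, and transferring monotonicity through the increasing linear maps $\bar U_{\spp}=u(1-\bar N)$, $\bar O_{\spp}=o(1-\bar A)$ is the right way to recover the statement as written. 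You also actually carry out the sign analysis -- proving $2V+f\,\partial_u V>0$ and its $o$-analogue via the identities $p_1-f(\bar I+\bar A)=o$ and $(1-p_1)-f(\bar D+\bar N)=u$ -- where the paper only asserts ``taking the gradient \dots we see that this is a strictly decreasing function.''

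One small patch is needed in your strictness step. You claim $\bar D>0$ forces $0<p_0<1$, but $p_0=1$ occurs in the degenerate configuration $u=0$, $\bar I=\bar N=0$ (a population consisting only of Decrease and Always members with no underreporters), and along that line the $p_0$-piece of $2V+f\,\partial_o V$ vanishes identically. The conclusion survives because in that configuration the $p_1$-piece equals exactly $1-p_1=(1-o)\bar D>0$ (and for the $u$-derivative the $p_0$-piece equals $f>0$), so both partial derivatives of $\lambda$ remain strictly negative. This is a boundary-case repair, not a gap in the argument: your inequalities $f(\bar I+\bar A)\le p_1$, $f(\bar D+\bar N)\le 1-p_1$ and their $p_0$-analogues are valid, and the case split on the sign of $1-2p_w$ goes through as you describe.
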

\begin{proof}
See the Appendix, Section \ref{proof:indepPower}. 
\end{proof}

In the absence of the independence condition, detection powers can plausibly increase or decrease with misreporter incidence. In the next section, we introduce the concept of worst-case power under a sensitivity model.

\subsection{Sensitivity Model}

We have introduced two perspectives for considering the relationship between the treatment effect and misreporting behavior. The first perspective, in which we consider covariance quantities like $\cov_{\spp}\left(U, \tau \right)$, and $\cov_{\spp}\left(O, \tau \right)$, has been useful for proving theorems. The latter perspective, in which we consider joint incidence proportions like $\overline{UI}$ and $\overline{OD}$, is useful for quantifying  possible deviations from independence. In this section, we use this feature to define a sensitivity model. 

We posit the existence of some value $\Gamma \geq 1$ such that 
\begin{equation}\label{eq:gammaBounds}
    \begin{split}
    \frac{1}{\Gamma} \leq \frac{\overline{UI}}{\overline{U} \cdot \overline{I}} \cdot (1 - \overline{N}) \leq \Gamma, \hspace{3mm} \frac{1}{\Gamma} \leq \frac{\overline{UD}}{\overline{U} \cdot \overline{D}} \cdot (1 - \overline{N}) \leq \Gamma, \hspace{3mm}  \frac{1}{\Gamma} \leq \frac{\overline{UA}}{\overline{U} \cdot \overline{A}} \cdot (1 - \overline{N}) \leq \Gamma, \\
    \frac{1}{\Gamma} \leq \frac{\overline{OI}}{\overline{O} \cdot \overline{I}} \cdot (1 - \overline{A}) \leq \Gamma, \hspace{3mm} \frac{1}{\Gamma} \leq \frac{\overline{OD}}{\overline{O} \cdot \overline{D}} \cdot (1 - \overline{A}) \leq \Gamma, \hspace{3mm} \frac{1}{\Gamma} \leq \frac{\overline{ON}}{\overline{O} \cdot \overline{N}} \cdot (1 - \overline{A}) \leq \Gamma. 
    \end{split}
    \end{equation}
The expressions in (\ref{eq:gammaBounds}) need to be unpacked. We refer to the ratios within each inequality as the ``response class ratios." Recall that, by definition, there are no underreporters among the Never class and no overreporters among the Always class. Hence $\overline{UN}$ and $\overline{OA}$ are both equal to 0 -- and there are three response classes in which undereporter incidence can be greater than zero, and three response classes in which overreporter incidence can be greater than zero. 

The given inequalities bound how much variation in underreporter and overreporter incidence can occur across the relevant response classes. Concretely, if $\Gamma = 4/3$, this implies that underreporters and overreporters can be no more than 33\% more but no less than 25\% less frequent within any of the three relevant response classes than they are on average across the three relevant response classes.


Note that the response class ratios are naturally bounded, e.g. for $\overline{UI}/(\overline{U} \cdot \overline{I})$, we have
\[ \frac{\overline{U} + \overline{I} - 1}{\overline{U} \cdot \overline{I}} \leq \frac{ \overline{UI}}{\overline{U} \cdot \overline{I}}  \leq \min\left(\frac{1}{\overline{I}}, \frac{1}{\overline{U}} \right) \] 
with analogous conditions for each of the other  response class ratios. We will typically assume that $\Gamma$ is sufficiently close to 1 such that the interval bounds in  (\ref{eq:gammaBounds}) are more restrictive than these bounds. Lastly, observe that 
\[ \overline{UI} + \overline{UD} + \overline{UA} = \overline{U} \hspace{3mm} \text{and} \hspace{3mm} \overline{OI} + \overline{OD} + \overline{ON} = \overline{O} \] 
must also be satisfied.

For ease of expressing the sample size computation as an optimization problem, we will denote each of the response class ratios as its own variable, with the $\Gamma$ bounds functioning as an upper and lower bound. For example, we define
\[ \delta_{UI} = \frac{\overline{UI}}{\overline{U} \cdot \overline{I}} \hspace{3mm} \text{ and hence } \hspace{3mm} \frac{1}{\Gamma} \leq \delta_{UI} \leq \Gamma \,,\] 
with analogous definitions for the other seven response class ratios. Key quantities can be easily expressed as a function of the response class ratios. For example, we can now write the bias as 
\begin{align*}
    \bias(\hat \tau) &= - \overline{U}\cdot\overline{I} \delta_{UI} + \overline{U}\cdot\overline{D}\delta_{UD} - \overline{O}\cdot\overline{I}\delta_{OI}  + \overline{O}\cdot\overline{D}\delta_{OD}.
\end{align*} 

\subsection{Power and Sample Size Calculations}

Under a given choice of $\Gamma$, we can now consider the question of sample-size calculation. Let us suppose an analyst is designing an experiment for a treatment intended to have a negative average treatment effect. The analyst can tolerate the probability of a Type II error no greater than $\beta \in [0, 1]$ (so her desired power is $1 - \beta$). The analyst also has a desired probability of a Type I error no greater than $\alpha \in [0, 1]$. 

To proceed, the analyst must propose prospective values of $\overline{I}, \overline{D}, \overline{N}, \overline{A}, \overline{U}, \overline{O}$, as well as a value for $\Gamma$. We collect the former variables and the response class ratios into vectors, defined as 
\begin{align*}
\boldsymbol \pi &= \left( \overline{I}, \overline{D}, \overline{N}, \overline{A}, \overline{U}, \overline{O} \right)^T \\
\boldsymbol \delta &= \left(\delta_{UI}, \delta_{UD}, \delta_{UA}, \delta_{OI}, \delta_{OD}, \delta_{ON}\right)^T.
\end{align*}
We can define the expected mean outcome among the treated and control units as functions of these values:
\begin{align*}
\mu_1\left(\boldsymbol \pi, \boldsymbol \delta \right) &= \overline{I} + \overline{A} - \overline{U} \left(\overline{I}\delta_{UI} + \overline{A} \delta_{UA} \right) + \overline{O} \left(\overline{D} \delta_{OD} + \overline{N} \delta_{ON}\right) , \\
\mu_0\left(\boldsymbol \pi, \boldsymbol \delta \right) &= \overline{D} + \overline{A} - \overline{U} \left(\overline{D}\delta_{UD} + \overline{A} \delta_{UA} \right) + \overline{O} \left(\overline{I} \delta_{OI} + \overline{N} \delta_{ON}\right).
\end{align*}
Next, we can express the standardized expectation of the difference-in-means estimator as 
\[ T(\boldsymbol \pi, \boldsymbol \delta, n) = n \cdot  \frac{\mu_1\left(\boldsymbol \pi, \boldsymbol \delta \right) - \mu_0\left(\boldsymbol \pi, \boldsymbol \delta \right) }{\mu_1\left(\boldsymbol \pi, \boldsymbol \delta \right) \cdot \left(1 - \mu_1\left(\boldsymbol \pi, \boldsymbol \delta \right) \right) + \mu_0\left(\boldsymbol \pi, \boldsymbol \delta \right) \cdot \left(1 - \mu_0\left(\boldsymbol \pi, \boldsymbol \delta \right) \right) }. \] 

Now, we can directly pose our problem as an optimization problem. In particular, we want to solve for the minimum treatment and control sample size $n$ such that we still achieve the desired power level under all possible configurations of the response class ratios. This is directly encoded in Optimization Problem \ref{mainOptProb}:

\begin{equation}\label{mainOptProb} 
\begin{aligned}
\argmin_n \max_{\boldsymbol \delta} &\hspace{5mm} T(\boldsymbol \pi, \boldsymbol \delta, n) \\
\text{subject to} & \hspace{5mm} \Phi\left( \Phi^{-1}(\alpha) -  T(\boldsymbol \pi, \boldsymbol \delta, n) \right) \geq 1 - \beta,\\
&\hspace{5mm}  \overline{U}(\overline{I}\delta_{UI} +  \overline{D}\delta_{UD} + \overline{A}\delta_{UA}) = \overline{U},\\ 
&\hspace{5mm}  \overline{O} (\overline{I}\delta_{OI} +  \overline{D}\delta_{OD} + \overline{N}\delta_{ON}) = \overline{O},\\ &\hspace{5mm}
\frac{1}{\Gamma} \leq \mathcal{H} \boldsymbol{\delta}  \leq \Gamma\,,
\end{aligned}
\end{equation}
where $\Phi(\cdot)$ is the CDF of a standard normal, and 
\[ \mathcal{H} = \text{diag} \left( 1 - \overline{N}, 1 - \overline{N}, 1 - \overline{N}, 1 - \overline{A}, 1 - \overline{A}, 1 - \overline{A} \right). \] 
Optimization Problem \ref{mainOptProb} is a quadratic fractional program and can be solved efficiently using Dinkelbach's method \citep{dinkelbach1967nonlinear, phillips1998quadratic}. For more details, see the discussion in the Appendix, Section \ref{app:solveOpt}. 

\subsection{Simulation Results}

We now consider how these computations work in practice. Recall that the analyst must provide estimates of the response class frequencies, so we begin with the assumption that $\overline{I} = 0.1, \overline{D} = 0.2, \overline{N} = 0.35,$ and $\overline{A} = 0.35$. This implies that the average causal effect is a 10\% reduction in incidence of the outcome. 

Suppose we want to achieve a power level of 80\% at a 95\% detection threshold. We use a one-sided hypothesis test. Were we to assume the population contained no underreporters and no overreporters in the population, we would find that a sample size of 612 units -- half in treatment, and half in control -- would be sufficient. As we will see, the required sample sizes dramatically increase once we incorporate the possibility of misreporting, especially if the misreporting incidence can vary by response class. 

In Figure \ref{fig:powerPlots}, we consider the cases in which either undereporting or overreporting is suspected, but not both. We show the required sample size for the desired power level and detection threshold. The incidence of underreporters and overreporters is varied on the $x$ axis, while we consider a variety of possible values of $\Gamma$. Recall that these are \emph{worst-case} sample sizes, assuming the incidence of under- or overreporters is as adversarial as possible under the $\Gamma$ constraints. 

As we can see, when $\Gamma$ is small, the required sample sizes grows slowly with misreporter frequency. Even if 20\% of respondents were to be underreporters or overreporters, the worst-case required sample size under $\Gamma = 1$ would be 1,162 units, or about twice the required sample size with no misreporting. However, as $\Gamma$ increases, the required sample size can grow very dramatically with misreporter frequency. The required sample size is more sensitive to overreporters than to underreporters, owing to the fact that the estimated causal effect is negative. With 20\% underreporters and $\Gamma = 1.5$, the required sample size grows to about 7,000 units; with 20\% overreporters and $\Gamma = 1.5$, the required sample size grows to over 20,000 units. 

\begin{figure}
    \centering
    \includegraphics[width = \textwidth]{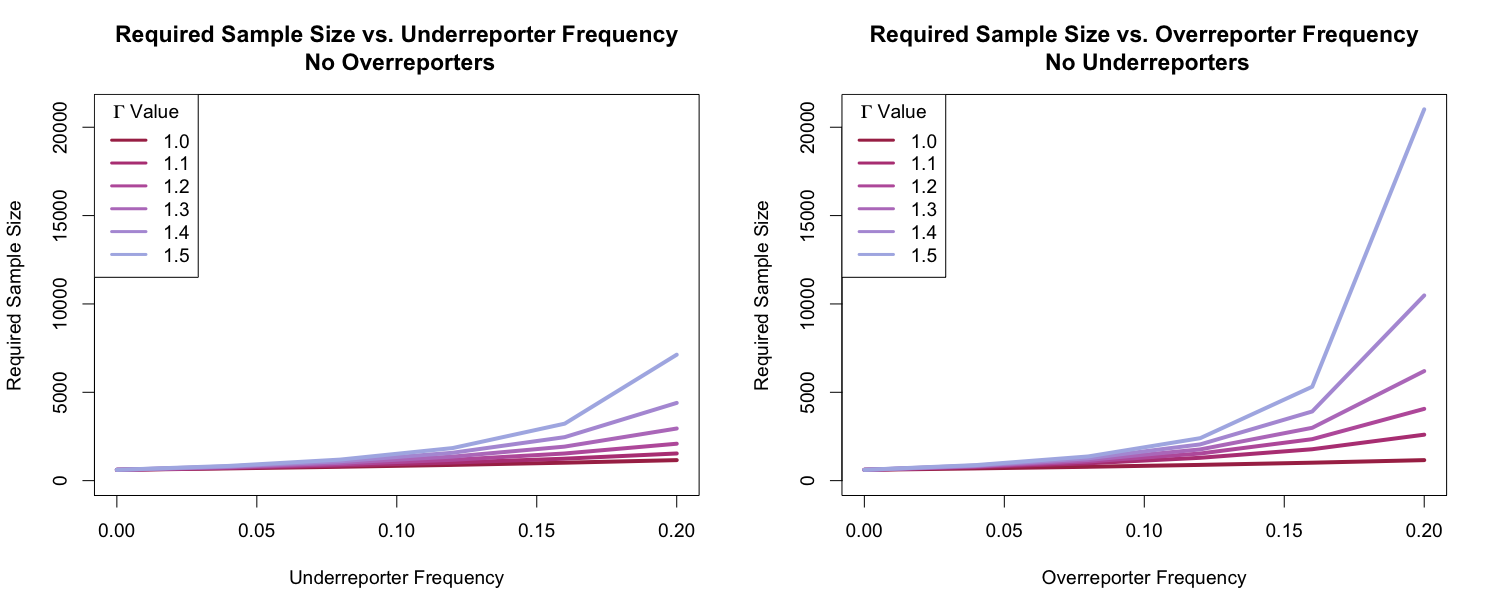}
    \caption{Required sample size for 80\% power and a detection threshold of 95\%, varying levels of $\overline U$ (left panel) and $\overline O$ (right panel), for different values of sensitivity parameter $\Gamma$. 
    \label{fig:powerPlots}}
\end{figure}

The same computations can be repeated when both undereporters and overreporters are suspected in the population. In Figure \ref{fig:powerPlotsBoth}, we repeat the process but vary the level of ``misreporters" under the assumption that half of misreporters are overreporters and half are underreporters. Hence, a misreporter frequency of 0.10 corresponds to $\overline U = 0.05$ and $\overline O = 0.05$. As we can see, the pattern is largely the same as in Figure \ref{fig:powerPlots}, but the rate of growth in required sample size at each value of $\Gamma$ is somewhere between the underreporters-only case and the overreporters-only case. At 20\% misrporters and $\Gamma = 1.5$, the worst-case sample size grows to over 12,000 units. 

\begin{figure}
    \centering
    \includegraphics[width = 0.6\textwidth]{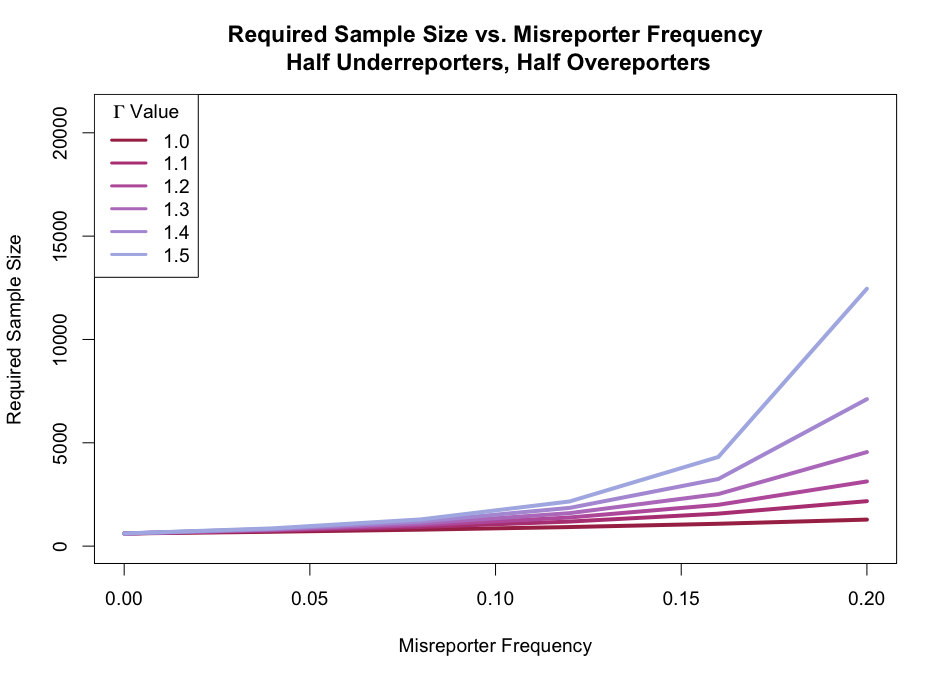}
    \caption{Required sample size for 80\% power and a detection threshold of 95\%, varying incidence of misreporters when half of misreporters are assumed to be underreporters and half are assumed to be overreporters. Lines represent different values of sensitivity parameter $\Gamma$. \label{fig:powerPlotsBoth}}
\end{figure}

\section{Worked Example}\label{sec:workedExample}

We demonstrate how researchers designing a study can account for concerns about measurement error in outcomes, using one of our prior studies. Consider a randomized trial evaluating a sexual violence prevention program for female students in sixth grade classrooms. From earlier research \citep{baiocchi2019prevalence}, we expect the annual rate of sexual violence in this population to be 7\%, and are concerned about underreporting in outcomes. 

A standard power calculation might proceed as follows. We have two populations to compare, and will observe two sample proportions of students reporting sexual violence in the prior year: $p_T$ and $p_C$. For simplicity, in this toy example we will not account for cluster structure. Suppose we use a one-sided hypothesis test and our goal is 80\% power. We assume the intervention reduces sexual violence by 50\% annually, so that $\mu = p_T - p_C = (0.07 * 0.50) - 0.07 = -0.035$. This leads to an estimate of the necessary sample size of 994 individuals. The full calculation can be found in the Appendix, Section \ref{app:powerCalc}. 

\begin{figure}
\centering
\includegraphics[width=0.7\textwidth]{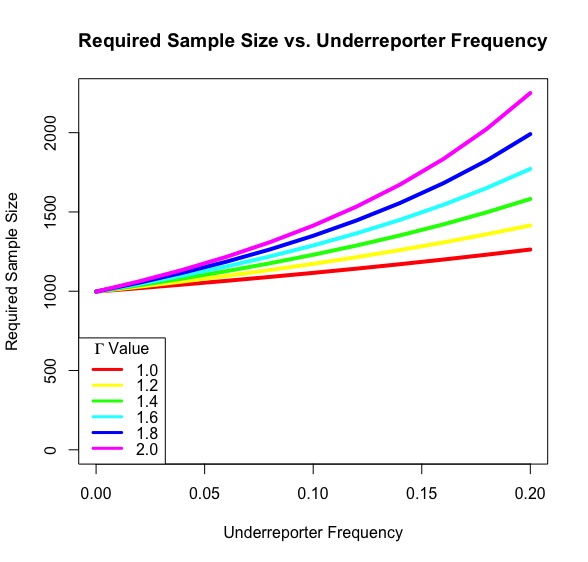}
\caption{Sample size by underreporter frequency for the classroom example.}\label{fig-worked}
\end{figure}

Let us now, however, account for underreporting. Figure \ref{fig-worked} shows the required sample size by underreporter frequency for a set of possible $\Gamma$ values. Assuming no underreporting, we remain at a suggested sample size of 1000, as in the standard power calculation case. But should we want to account for underreporting, we have a guide for the required sample size given our estimated underreporter frequency and desired sensitivity.  

\section{Special case: estimating the joint distribution of response and reporting classes}\label{sec:specialCase}

Suppose we could obtain gold standard measurements of the observed and reported outcome for an unbiased pilot sample from the population of interest. That is, for a sample of potential participants,  we could see both $(\yitrue, \yirep)$. Such data might be available by asking participants their responses first with the proposed measurement instrument (e.g., a survey), and then validating responses with a measurement that is more labor-intensive, expensive, but with less/no error (a ``gold-standard'' measure). As we will demonstrate in this section, such data can be quite useful for the purposes of improving the experimental design. This procedure can also be performed after a study has been completed, in which case the data can be used to modify measures of uncertainty and the point estimate as well as being used to provide information for planning future studies.

We use the same notation as the prior section. Each unit in the gold standard sample will receive either the treated or control condition. We suppose our sample contains $n$ treated units and $n$ control units. If $i$ is a treated unit -- then we have access to $(\yitrue(1), \yirep(1))$ representing the true and reported potential outcomes. More explicitly, these values are given by
\[ \yitrue(1) = W_i \yitrue, \hspace{5mm} \text{and} \hspace{5mm} \yirep(1) = W_i \yirep\,. \] 
Analogously, if $i$ is a control unit, then we have access to $(\yitrue(0), \yirep(0))$, where 
\[ \yitrue(0) = (1 - W_i) \yitrue, \hspace{5mm} \text{and} \hspace{5mm} \yirep(0) = (1 - W_i) \yirep\,. \] 

Now, we consider the super-population from which the data is sampled. Observe that we can relate the super-population frequencies of the true and reported outcomes to the joint distribution of our response and reporting classes defined in Section \ref{sec:simpAssumptions}. These frequencies will be estimates of population frequencies. In Table \ref{tab:yi1} and Table \ref{tab:yi0}, we show how each of these population frequencies relates to the joint response-reporting classes. Note that we are imposing Assumption \ref{assumption:noDefiers}, under which we are assume there are no false-tellers. 

\begin{table}[h]
\centering
\begin{tabular}{|c|c|c|c|}
\hline
                                                  & \hspace{10mm} & \multicolumn{2}{c|}{$Y_i^{(r)}(1)$} \\ \hline
                                                  &   & \hspace{20mm} 1\hspace{20mm} &\hspace{20mm}  0 \hspace{20mm}                        \\ \hline
\multirow{2}{*}{$Y_i^{(t)}(1)$} & 1 &      $\overline{TI} + \overline{TA} + \overline{OI}$                     &    $\overline{UI} + \overline{UA}$                       \\ \cline{2-4} 
                                                  & 0 &  $\overline{OD} + \overline{ON}$                          &          $\overline{TD} + \overline{TN} + \overline{UD}$                 \\ \hline
\end{tabular}
\caption{\label{tab:yi1}Joint reporting and response classes for treated potential outcomes.}
\end{table}

\begin{table}[h]
\centering
\begin{tabular}{|c|c|c|c|}
\hline
                                                  & \hspace{10mm}  & \multicolumn{2}{c|}{$Y_i^{(r)}(0)$} \\ \hline
                                                  &   & \hspace{20mm} 1 \hspace{20mm}                         &\hspace{20mm} 0 \hspace{20mm}                         \\ \hline
\multirow{2}{*}{$Y_i^{(t)}(0)$} & 1 &      $\overline{TD} + \overline{TA} + \overline{OD}$                     &    $\overline{UD} + \overline{UA}$                       \\ \cline{2-4} 
                                                  & 0 &  $\overline{OI} + \overline{ON}$                          &          $\overline{TI} + \overline{TN} + \overline{UI}$                 \\ \hline
\end{tabular}
\caption{\label{tab:yi0}Joint reporting and response classes for untreated potential outcomes.}
\end{table}

Crucially, under our assumptions, every cell in these tables can be estimated unbiasedly. For example, we have
\[ E \left( \frac{1}{n_t} \sum_{i: W_i = 1} I(\yitrue(1) = 1, \yirep(1) = 1) \right) = \overline{TI} + \overline{TA} + \overline{OI} \,,\] 
and 
\[ E \left( \frac{1}{n_c} \sum_{i: W_i = 0} I(\yitrue(0) = 1, \yirep(0) = 0) \right) = \overline{UD} + \overline{UA} \,.\]
Referring to the results of Theorem \ref{thm:bias}, we observe that the estimator 
\begin{align*}
\hat B &= \frac{1}{n_c} \sum_{i:W_i = 0} I(\yitrue(0) = 1, \yirep(0) = 0)  - \frac{1}{n_t} \sum_{i:W_i = 1} I(\yitrue(1) = 1, \yirep(0) = 0) \\
& + \frac{1}{n_t} \sum_{i:W_i = 1} I(\yitrue(1) = 0, \yirep(1) = 1) - \frac{1}{n_c} \sum_{i:W_i = 0} I(\yitrue(0) = 0, \yirep(1) = 1)
\end{align*}
satisfies 
\[ E(\hat B) = - \overline{UI} + \overline{UD} - \overline{OI} + \overline{OD} = Bias(\hat \tau)\,. \] 
In other words, with no further assumptions, we can obtain an estimator for the amount of bias in the usual difference-in-means estimator when using the proposed measurement instrument in an experiment. This gives very useful insight into the potential pitfalls of the experiment. If this bias is intolerably large, the experimenters may seek to change to a more accurate instrument. 

Other sample proportions may also be informative. For example, to obtain an estimate of the power, one would need an estimate of $\overline{\tilde Y_i^{(r)}(1)}$ and $\overline{\tilde Y_i^{(r)}(0)}$, the average reported potential outcomes under each condition. These can be estimated from the pilot sample by simply taking the empirical means.

Finally, under an additional assumption, one could infer the entire joint distribution of the response and reporting classes. In particular, suppose we assume that there are no individuals in the Decrease class -- i.e., the treatment can only yield an increase in the outcome of interest, relative to the control condition. Then $\bar D = 0$. By inspection of Tables \ref{tab:yi1} and \ref{tab:yi0}, one can see that all the nonzero proportions -- $\overline{TI}, \overline{TN}, \overline{TA}, \overline{OI}, \overline{ON}, \overline{UI},$ and $\overline{UA}$ -- can be estimated unbiasedly from the pilot sample. Estimating the proportions of the classes usefully decomposes the bias term into groups of individuals who may have more recognizable reasons for their reporting class. Say after running one of these ``gold-standard'' checks we see that the bias is largely arising from overreporters/increasers rather than underreporters/always then we have a better sense of how to alter our data collection protocol by considering the dynamics leading to increasers overreporting.

\section{Concluding Remarks}\label{sec:conclusion}

The formal treatment of mismeasurement and reporting bias is a somewhat recent development in the causal inference literature, especially when considering errors that may not be independent of potential outcomes \citep{imai2010causal}. This problem poses a real threat to valid causal inference -- especially when considering randomized trials in which the outcomes include intimate or sensitive topics. The problem can be partially allayed through careful survey design and other methods to ensure participant privacy. But there are many practical settings in which all participants cannot reasonably be expected to disclose data about an outcome of interest. 

In this paper, we have considered the problem from the perspective of a researcher designing a prospective randomized trial where the outcome of interest is binary. Our contributions are twofold. First, we have sought to rigorously define a set of ``reporting classes" 
to characterize individual reporting behavior. We demonstrate that the joint distribution of these reporting classes -- along with the ``response classes'' reflecting how individuals respond to treatment \citep{hernan2010causal} -- will determine exactly how much bias and variance will be induced in our causal estimate. Second, we have proposed a method for practitioners to adequately power their analyses in the presence of misreporting, assuming a worst-case deviation from independence of the reporting and response classes. 

Future opportunities in this area are myriad. We have focused on a model of misreporting that treats the response class as a fixed characteristic of an individual, given the measurement instrument. Recent work has considered the case in which misreporting behavior is differential \citep{10.1093/aje/kwz133}, meaning individuals may misreport only when they receive the treatment or only when they receive the control. This introduces additional complexities into the task of determining adequate sample size, but would represent an important step toward a realistic model of misreporting behavior. 

\newpage
\section*{Acknowledgments and Funding}
We thank Luke Miratrix, Sophie Litschwartz, Kristen Hunter, and Julia Simard for their useful comments and feedback. We are also grateful to Google and to the Marjorie Lozoff Fund, provided by the Michelle R. Clayman Institute for Gender Research, for financial support. 
This research was conducted while Rina Friedberg was a student at Stanford, and is not affiliated with LinkedIn.

\newpage

\bibliographystyle{apalike}
\bibliography{biblio}

\appendix
\section{Example: Sexual Debut Delay Program}

Here, we provide another motivating example for understanding misreporting behavior in randomized controlled trials. 

There is a complicated literature on sexual debut in adolescents and its connection to both short-term and long-term health and behaviors. Suppose we are asked to study an existing program intended to delay sexual debut among adolescents. In this scenario, we consider the concurrent presence of over- and underreporters, as well as truth-tellers. Some study participants may underreport debut to reduce their chances of potentially having that information revealed to peers and thus avoid ensuing stigma. Others may overreport, reporting an aspirational state rather than fact. 

The important issue to track is: these are not random perturbations of the measurement; these are not symmetric; the motivations may be challenging to discern in any particular example but they are likely arising from non-random issues that can be investigated and obviated. Some improvements to the study that reduce the number of overreporters may also reduce the number of underreporters -- for example, using computer-based responses or using randomized response techniques -- because these methods of data collection offer a form of protection against having accidental disclosures lead to social embarrassment. 

Unfortunately, understanding the dynamics of a particular reporting class may have very little benefit on understanding other reporting class dynamics. For example, a method to make many underreporters become truth-tellers may not be useful for getting overreporters to become truth-tellers. But this is also true within a reporting-class: the distribution of reasons for why female overreporters are overreporters may be quite different from the distribution of reasons for why male overreporters overreport. Thus subgroups within a reporting-class may require different modifications to the study protocol to change their reporting-class to truth-teller. This is because there are real-world, causal issues that give rise to ME, and these can differ between, and within, these compliance classes. 

\section{Proof of Theorem \ref{thm:bias}}\label{biasProof}
Observe that we can write: 
\[ \yirep = W_i \bigg( (1-U_i) (1-O_i)  Y_i(1) + O_i \bigg) + (1-W_i) \bigg( (1-U_i) (1-O_i)  Y_i(0) + O_i\bigg)\,. \]
Defining
\[ \tilde Y_i(1) = (1-U_i) (1-O_i)  Y_i(1) + O_i, \hspace{5mm} \tilde Y_i(0) = (1-U_i) (1-O_i)  Y_i(0) + O_i \] 
we then have 
\[ \yirep = W_i \tilde Y_i(1) + (1 - W_i) \tilde Y_i(0) \,\] 
which matches the standard form for the observed outcome.

We begin by proving the first equality. We condition on the sample chosen for the experiment. Within this sample, we observe by analogy that 
\begin{align*}
\E \left( \hat \tau \right) &= \E_R \left( \E_W \left( \frac{1}{n_t} \sum_{i = 1}^N Y_i W_iR_i - \frac{1}{n_c} \sum_{i = 1}^N Y_i(1-W_i)R_i \mid \{R_i\}_{i = 1}^N \right) \right)  \\
&= \E_R \left( \frac{1}{n}  \left( \sum_{i = 1}^N R_i (\tilde Y_i(1) - \tilde Y_i(0)) \right) \right) \\
&= \frac{1}{N} \sum_{i = 1}^N (1 - U_i)(1-O_i)\tau_i \\
&= \frac{1}{N} \sum_{i = 1}^N (1 - U_i - O_i) \tau_i \\
&= \left(1 - \bar U_{\spp} - \bar O_{\spp}\right) \tau_{\spp} - \cov_{\spp}\left(U, \tau \right) - \cov_{\spp}\left(O, \tau \right)\,.
\end{align*}

The second equality follows from the fact that $\tau_i = I_i - D_i$ and thus the second to last line of the above expansion can equivalently be expanded as 
\begin{align*}
 \E \left( \hat \tau \right) &= \frac{1}{N} \sum_{i = 1}^N (1 - U_i - O_i) (I_i - D_i) \\
&= \tau_{sp} - \overline{UI} + \overline{UD} - \overline{OI}  + \overline{OD}\,.
\end{align*}

\section{Proof of Theorem \ref{thm:indepPower}}\label{proof:indepPower}

We can obtain an unbiased estimate for $V$, the true variance of $
\hat \tau$, by computing the Neyman variance estimator, 
\[ \hat V = \frac{s_t^2}{n_t} + \frac{s_c^2}{n_c}\,,  \]
where $s_t^2$ and $s_c^2$ are the empirical variance estimates among treated and control outcomes \citep{imbens2015causal}. Invoking Slutsky's Lemma, we find that 
\begin{equation}\label{eq:limNormal}
    \frac{\hat \tau - \E \left( \hat \tau \right)}{\sqrt{\hat V}} \stackrel{d} \longrightarrow N \left(0, 1 \right) \,.
\end{equation} 

In the testing setting, we compare $\hat \tau/\sqrt{\hat V}$ versus a $N(0, 1)$ reference distribution. If we use a one-sided test with a probability cutoff of $\alpha$, then our critical value is $\Phi^{-1}(\alpha)$ where $\Phi(\cdot)$ is the inverse CDF of a standard normal. The detection probability is 
\begin{align*}
P(\text{detection}) &= P\left(\frac{\hat \tau}{\sqrt{\hat V}} < \Phi^{-1}(\alpha) \right) \\
&= P\left(\frac{\hat \tau - \E \left( \hat \tau \right)}{\sqrt{\hat V}} < \Phi^{-1}(\alpha) - \frac{\E  (\hat\tau)}{\sqrt{\hat V}}\right)\,. \\
\end{align*}
Invoking the distribution limit in (\ref{eq:limNormal}) and the fact that $\sqrt{\hat V} \stackrel{p} \to \sqrt{V}$, we see that
\begin{equation}\label{eq:}
P(\text{detection}) \to \Phi\left( \Phi^{-1}(\alpha) - \frac{\E (\hat \tau)}{\sqrt{V}} \right).
\end{equation}  
This formula gives us a way to compute the relative detection probability in the case of measurement error. 

Under the conditions of the theorem, it follows immediately that 
\[ \cov_{\spp}\left(U, \tau \right) =  \cov_{\spp}\left(O, \tau \right) = 0 \] 
and thus 
\[ \E(\hat \tau) =  \left(1 - \bar U_{\spp} - \bar O_{\spp}\right) \tau_{\spp}\,. \] 
Next, we consider $V$, which is given by 
\[ V = \frac{\tilde \sigma_t^2}{n_t} + \frac{\tilde \sigma_c^2}{n_c}  \] 
where $\tilde \sigma_t^2$ is the population variance of $\tilde Y_i(1)$ and $\tilde \sigma_c^2$ the population variance of $\tilde Y_i(0)$. Because the $\tilde Y_i(1)$ and $\tilde Y_i(0)$ are binary, we see
\begin{align*}
\tilde \sigma_t^2 &= \overline{\tilde Y_i(1)}\cdot (1 - \overline{\tilde Y_i(1)}) 
\end{align*}
where $\overline{\tilde Y_i(1)}$ is the mean of the $\tilde Y_i(1)$ values in the super-population. Under the independence assumption, this simplifies to 
\begin{align*}
\tilde \sigma_t^2 &= \frac{1}{N^2} \left( \sum_i (1-U_i) (1-O_i)  Y_i(1) + O_i \right) \left( \sum_i 1 - (1-U_i) (1-O_i)  Y_i(1) + O_i, \right)\\
&= \left( \left(1 - \bar U_{\spp} - \bar O_{\spp}\right) \overline{Y_i(1)} + \bar O_{\spp} \right)\left( 1 - \left(1 - \bar U_{\spp} - \bar O_{\spp}\right) \overline{Y_i(1)} - \bar O_{\spp} \right) 
\end{align*}
where $\overline{ Y_i(1)}$ is the mean of the $ Y_i(1)$ values in the super-population. The expansion of $\tilde \sigma_c^2$ is analogous.

Putting this together, we see
\begin{align*}
\frac{\E (\hat \tau)}{\sqrt{V}} &= 
   \left( \left(1 - \bar U_{\spp} - \bar O_{\spp}\right) (\overline{Y_i(1)} - \overline{Y_i(0)}\right) \bigg/
    \\ & \left(
    \frac{\left( \left(1 - \bar U_{\spp} - \bar O_{\spp}\right) \overline{Y_i(1)} + \bar O_{\spp} \right)\left( 1 - \left(1 - \bar U_{\spp} - \bar O_{\spp}\right) \overline{Y_i(1)} - \bar O_{\spp} \right)}{n_t}  \right. + \\
    &\left.  \frac{\left( \left(1 - \bar U_{\spp} - \bar O_{\spp}\right) \overline{Y_i(0)} + \bar O_{\spp} \right)\left( 1 - \left(1 - \bar U_{\spp} - \bar O_{\spp}\right) \overline{Y_i(0)} - \bar O_{\spp} \right)}{n_c} \right)^{1/2} \,.
\end{align*}

Taking the gradient of this quantity with respect to $\bar U_{\spp}$ and imposing the relevant domain restrictions, we see that this is a strictly decreasing function of $\bar U_{\spp}$. Since the power is itself an increasing function of $\E (\hat \tau)/\sqrt{V}$, it follows that power decreases as $\bar U_{\spp}$ rises, ceteris paribus. An identical analysis yields the same insight for $\bar O_{\spp}$.

\section{Solving Optimization Problem \ref{mainOptProb}}\label{app:solveOpt}
The method of \cite{dinkelbach1967nonlinear} provides an efficient technique for solving a class of fractional programming problems. In particular, consider the problem 
\begin{equation}\label{optProb:basic}
\min_{x \in \mathcal{X}} \frac{f(x)}{g(x)}
\end{equation}
where $f(x)$ is convex and $g(x)$ is concave and positive for all $x \in \mathcal{X}$. 
Consider also the function
\begin{equation*}
h(\gamma) = \min_{x \in \mathcal{X}} f(x) - \gamma g(x) \,.
\end{equation*}
Dinkelbach showed that $h(\gamma) = 0$ if and only if $\gamma$ is the solution to Optimization Problem \ref{optProb:basic}.

This provides a straightforward framework for solving Problem \ref{app:solveOpt}. In particular observe that 
\[ f(\overline{U}) = \left(\overline{I} - \overline{D} - \overline{U} \left(\overline{I}\delta_I - \overline{D} \delta_D \right)\right) \] 
is convex in $\overline{U}$ and 
\begin{align*}
g(\overline{U}) &= \left( 
\left(\overline{I} + \overline{A} - \overline{U} \left(\overline{I}\delta_I + \overline{A} \delta_A \right)\right)\left(1 - \left(\overline{I} + \overline{A} - \overline{U} \left(\overline{I}\delta_I + \overline{A} \delta_A \right)\right)\right) +\right. \\
&  \left. \left(\overline{D} + \overline{A} - \overline{U} \left(\overline{D}\delta_D + \overline{I} \delta_A \right)\right)\left(1 - \left(\overline{D} + \overline{A} - \overline{U} \left(\overline{D}\delta_D + \overline{A} \delta_A \right)\right)\right) \right) 
\end{align*}
is concave in $\overline{U}$. Hence, the function
\[ h(\lambda) = f(\overline{U}) - \lambda \left(  g(\overline{U}) \right)\] 
is convex in $\overline{U}$ for all positive values of $\lambda$. The optimization problem
\begin{equation}\label{optProb:dink}
\begin{aligned}
\text{minimize} & \hspace{5mm} h(\lambda) \\
\text{subject to} & \hspace{5mm}  \overline{I}\delta_I +  \overline{D}\delta_D + \overline{A}\delta_A + \overline{N}\delta_N = 1,\\ &\hspace{5mm}
\frac{1}{\Gamma} \leq \delta_I, \delta_D, \delta_N, \delta_A \leq \Gamma\,.
\end{aligned}
\end{equation}
is convex and can be solved efficiently. 

This leads to a simple binary search algorithm that can be used to solve Optimization Problem \ref{mainOptProb}. The algorithm inputs are type I error bound $p$, type II error bound $\beta$, and user-supplied estimates of $\overline{I}, \overline{D}, \overline{N}, \overline{A}, \overline{U}$ and $\Gamma$. 

\begin{itemize}
    \item Begin with $\lambda_{lwr} = 0$, a large value $\lambda_{upr}$, and a small value $\epsilon$. 
    \item While $\lambda_{upr} - \lambda_{lwr} > \epsilon$:
    \begin{itemize}
        \item Set $\lambda_{mid} = (\lambda_{upr} + \lambda_{lwr})/2$.
        \item Solve Optimization Problem \ref{optProb:dink} with objective given by $h(\lambda_{mid})$. Denote its optimal value $h^{\star}$.
        \item If $h^{\star} < 0,$ set $\lambda_{upr} \leftarrow \lambda_{mid}$. Otherwise, set $\lambda_{lwr} \leftarrow \lambda_{mid}$. 
    \end{itemize}
    \item Compute the required sample size for each of the treatment and control arms as 
    \[ n =  \frac{\left(\Phi^{-1}(1 - \beta) + \Phi^{-1}(p)\right)^2}{\lambda_{mid}}\,. \] 
\end{itemize}

\section{Power Calculation in Detail}\label{app:powerCalc}

Recall that we have two populations to compare, with sample proportions of students reporting sexual violence in the prior year: $p_T$ and $p_C$. Assume the number of individuals in each treatment group is the same; that is, $n_T = n_C$. We want to calculate the required sample size for a one-sided hypothesis test with power 80\%.  

For a one-sided test, the lower end of our confidence interval will be: 
\[ \hat{\mu} - 1.645 * \hat{se}/\sqrt{n} \] 

To reach power of 80\%, we will need:

\[ \mathbb{P}_{H_a}(\hat{\mu} > 1.645 * \hat{\sigma}/\sqrt{n_T}) \ge 0.20 \]

We assume the intervention reduces sexual violence by 50\% annually, so that $|\mu| = |p_T - p_C| = |(0.07 * 0.50) - 0.07| = 0.035$. So, rearranging to isolate $n$, we get 

\[ n \ge \left( \frac{1.645 + \Phi^{-1}(0.8)}{0.035 / \hat{\sigma}} \right)^2\,. \] 

We can estimate $\hat{\sigma} = \sqrt{p_T(1-p_T) + p_C(1-p_C)} = \sqrt{0.09} = 0.314$. This leads to an estimate of the necessary sample size as: 
\[ n = 2 * n_T = 2 * \{(1.64 + 0.84) * 0.314)/0.035\}^2 = 994 \,.\]

\end{document}